\newcommand{\bra}[1]{\langle #1|}
\newcommand{\ket}[1]{|#1\rangle}
\newcommand{\braket}[2]{\langle #1|#2\rangle}
\newcommand{\dollar}[0]{\$}
\newcommand{\mymatrix}[2]{\left( \begin{array}{#1} #2\end{array} \right)}
\newcommand{\myvector}[1]{\mymatrix{c}{#1}}
\newtheorem{fact}{Fact}
\title{Can one quantum bit separate any pair of words with zero-error?}
\author{Aleksandrs Belovs\inst{1}$^,$\thanks{Belovs was supported by FP7 FET Proactive project QALGO.} \and Juan Andres Montoya\inst{2} \and Abuzer Yakary{\i}lmaz\inst{3}$^,$\thanks{Yakary{\i}lmaz was partially supported by CAPES with grant 88881.030338/2013-01. Moreover, some parts of the work were done while Yakary{\i}lmaz was visiting Bogot\'{a}, Colombia in December 2014.}}
\institute{CWI, Amsterdam, the Netherlands
		\and 
		Departamento de Matem\'{a}ticas, Universidad Nacional de Colombia, \\ Bogot\'{a}, Colombia     
       \and
National Laboratory for Scientific Computing, \\ Petr\'{o}polis, RJ, 25651-075, Brazil
\email{stiboh@gmail.com,jamontoyaa@unal.edu.co,abuzer@lncc.br}
}
\authorrunning{A. Belovs \and J. A. Montoya \and A. Yakary{\i}lmaz} 
\begin{document}

\maketitle

\begin{abstract}
Determining the minimum number of states required by a finite automaton to separate a given pair of different words is an important problem. In this paper, we consider this problem for quantum automata (QFAs). 
We show that 2-state QFAs can separate any pair of words in nondeterministic acceptance mode and conjecture that they can separate any pair also with zero-error. 
Then, we focus on (a more general problem) separating a pair of two disjoint finite set of words. We show that QFAs can separate them efficiently in nondeterministic acceptance mode, i.e. the number of states is two to the power of the size of the small set. Additionally, we examine affine finite automata (AfAs) and show that two states are enough to separate any pair with zero-error. Moreover, AfAs can separate any pair of disjoint finite sets of words with one-sided bounded error efficiently like QFAs in nondeterministic mode.
\\
\textbf{Keywords:} \textit{quantum finite automaton, affine finite automaton, zero--error, bounded--error, nondeterminism, promise problems, succinctness}
\end{abstract}

\section{Introduction}

Separating two words is the simplest non-trivial promise problem: given two different words $x$ and $y$, accept $x$ and reject $y$.
Nonetheless, it is still not known what is the size of a deterministic finite automaton (DFA) solving this problem for the worst pair of words $x$ and $y$ of length $n$.
This problem was formulated by Goral\v c\'ik and Koubek \cite{GK86}.
The best known upper bound $O(n^{2/5}\log^{3/5} n)$ is due to Robson \cite{Rob89,Rob96}.
The best known lower bound is only $\Omega(\log n)$.
The same bound also holds for \textit{non-deterministic} finite automata~\cite{DESW11}. 
Recently, some alternative models of automata were investigated in \cite{YakM15A}.

In this paper, we focus on quantum finite automata (QFAs).
Recently it was shown that the state efficiency of zero-error QFAs over bounded-error probabilistic finite automata (or some other classical automata models) cannot be bounded if we focus on promise problems \cite{AY12,BMP14,RasY14A,GefY15A,GQZ15}. Since separating a pair of different words (or disjoint finite languages) is a very special promise problem, we find it interesting to ask how efficient can QFAs be on these problems.
We show that 2-state QFAs can separate any pair of words if we allow nondeterministic acceptance mode.
We also conjecture, and give some evidence that 2-state QFAs can separate any pair \textit{also 
with zero error probability}!  

Then, we focus on (a more general problem) separating a pair of two disjoint finite set of words. We show that QFAs can separate them efficiently in nondeterministic acceptance mode with the number of states exponential in the size of the smallest set.

Very recently, affine finite automaton (AfA) was introduced as a quantum-like non-linear generalization of PFA that can use negative transition values \cite{DCY16A}, and, they were shown to be more powerful than both QFAs and PFAs with bounded and unbounded error. Moreover, they can be more state efficient than QFAs and PFAs \cite{VilY16A}. Therefore, we also investigate AfAs in our context and we indeed obtain better results than QFAs. 2-state AfAs can separate any pair with zero-error. Moreover, they can separate any given word from the rest of words with one-sided bounded error. In the case of separating a pair of disjoint finite languages, AfAs can efficiently separate them with zero-error if one language is a singleton and with one-sided bounded-error, otherwise, where the number of states is two the power of the big (small) set in the first (second) case.

In the next section, we provide the necessary background. The results on separating pairs is given in Section \ref{sec:pairs}. It also includes our bounded-error AfA algorithms recognizing singleton languages. Then, the results on separating two finite sets are presented in Section \ref{sec:finite-sets}.

\section{Background}
We refer the reader to \cite{SayY14A} for a pedagogical introduction to quantum finite automata (QFAs), to \cite{AY15A} for an inclusive survey on QFAs, and to \cite{NC10} for a complete reference on quantum computation \cite{NC10}. For the basics of affine systems and affine finite automata (AfAs), we refer the reader to \cite{DCY16A}. 

We denote the alphabet by $ \Sigma $, not containing right end-marker $  \dollar $, throughout the paper. For any given word $ x \in \Sigma $, $ |x| $ represents the length of $x$, $ |x|_\sigma $ represents the number of occurrences of symbol $ \sigma $ in $ x $, and $ x_j $ represents the $ j $-th symbol of $ x $, where $ \sigma \in \Sigma $ and $ 1 \leq j \leq |x| $. As a special case, if $ |\Sigma| = 1 $, then the automaton and languages can be called unary.

\subsection{Easy and hard pairs}

Throughout the paper, a pair of words $ (x,y) $ refers two different words defined on the same alphabet. A pair of words $ (x,y) $ is called \textit{easy} if $ x $ and $ y $ has different numbers of occurrences of a symbol, i.e. $ \exists \sigma \in \Sigma \mbox{ such that } |x|_\sigma \neq |y|_\sigma $.  Otherwise, the pair is called \textit{hard}. Remark that any pair with different lengths (and so any unary pair) is easy.

Any hard pair defined on an alphabet with at least three elements can be mapped to a binary hard pair as follows. Let $ (x,y) $ be a hard pair defined on $ \{ \sigma_1,\ldots,\sigma_k \} $ for some $ k>2 $. Since the pair is hard, we have
\[
	|x|_{\sigma_i} = |y|_{\sigma_i}
\]
for each $ 1 \leq i \leq k $. Then there should be an index $ j $ ($ 1 \leq j \leq |x| = |y| $) such that $ x_j = \sigma_i \neq y_j = \sigma_{i'} $ for $ i \neq i' $. If we delete all the other symbols and keep only $ \sigma_i $s and $ \sigma_{i'} $s in $x$ and $ y$, we obtain two new words: $ x' $ and $ y' $, respectively. It is clear that $ (x',y') $ is a hard pair. So, instead of separating the hard pair $ (x,y) $, we can try to separate $ (x',y') $. Algorithmically, we apply the identity operators on the symbols other than $ \sigma_i $ and $ \sigma_{i'} $. Hence, unless otherwise specified, we focus on only unary and binary words throughout the paper.

\subsection{QFAs}

Quantum finite automaton (QFA) is a non-trivial generalization of probabilistic finite automaton \cite{Hir10,YS11A}. Here we give the definition of the known simplest QFA model, called Moore-Crutchfield QFAs (MCQFAs) \cite{MC00} since we can present our results (and our conjecture) based on this model.

An $n$-state MCQFA $ M $, which operates on $ n $-dimensional Hilbert space ($ \mathcal{H}_n $, i.e. $ \mathbb{C}^n $ with the inner product)   is a 5-tuple
\[
	M=(Q,\Sigma,\{ U_{\sigma} \mid \sigma \in \Sigma \},\ket{u_0},Q_a),
\]
where $ Q = \{ q_1,\ldots,q_n \} $ is the set of states, $ U_{\sigma} \in \mathbb{C}^{n \times n} $ is a unitary transition matrix whose $ (i,j) $th entry represent the transition amplitude from the state $ q_j $ to the state $ q_i $ when reading symbol $ \sigma \in \Sigma $ ($ 1 \leq i,j \leq n $), $ \ket{u_0} \in \mathbb{C}^n $ is the column vector representing the initial quantum state, and $ Q_a \subseteq Q $ is the set of accepting states. The basis of $ \mathcal{H}_n $ is formed by $ \{ \ket{q_j} \mid 1 \leq j \leq n \} $ where $ \ket{q_j} $ has 1 at the $ j $-th entry and 0s in the remaining entries. At the beginning of the computation, $ M $ is in $ \ket{u_0} $, either one of the basis states or a superposition (a linear combination) of basis states. Let $ x \in \Sigma^* $ be a given input word. During reading the input $x$ from left to right symbol by symbol, the quantum state of $ M $ is changed as follows:
\[
	\ket{u_j} = U_{x_j} \ket{u_{j-1}},
\] 
where $ 1 \leq j \leq |x| $. After reading the whole word, the quantum state is measured to determine whether $M$ is in an accepting state or not (a measurement on computational basis). Let the final quantum state,  represented as $\ket{u_f^x}$ or $ \ket{u_f} $,  have the following amplitudes
\[
	\ket{u_f^x} = \ket{u_f} = \ket{u_{|w|}} = \left( \begin{array}{c}
		\alpha_1\\ \alpha_2 \\ \vdots \\ \alpha_n 
	\end{array}	 \right).
\] Since the probability of observing $ j $th state is $ | \alpha_j |^2 $, the input is accepted with probability $ \sum_{q_j \in Q_a} | \alpha_j |^2  $. 

\subsection{AfAs}

An affine finite automaton \cite{DCY16A} can be in an affine state that can be represented as a column vector over real numbers where the summation of all entries are equal to 1. The evolution of an AfA is governed by affine transformations that preserve the summation of vectors, i.e. each column of an affine transformation is an affine state. To retrieve information from an AfA, a measurement-like operator called weighting operator is applied (the details are given below).

Formally, an $n$-state AfA $ M $ is a 5-tuple
\[
	M = (E, \Sigma, \{ A_\sigma \mid \sigma \in \Sigma \cup \{ \dollar  \} \},v_0,E_a \}),
\] 
where $ E = \{e_1,\ldots,e_n\} $ is the set of states, $ E_\sigma $ is an affine transformation that is applied on the actual affine state when reading symbol $ \sigma \in \Sigma \cup \{ \dollar \} $, $ v_0 $ is the initial affine state, and $ E_a \subseteq E $ is the set of accepting states. The AfA $ M $ starts its computation with $ v_0 $. Let $ x \in \Sigma^* $ be a given input. The computation of $ M $ can be traced similar to MCQFAs:
\[
	v_j = A_{x_j} v_{j-1},
\]
where $ 1 \leq j \leq |x| $. Then, the right end-marker is read for post-processing. (Remark that the right end-marker is unnecessary for MCQFAs \cite{BP02}):
\[
	v^x_f = v_f = A_\dollar v_{|x|}. 
\]
After this, a weighting operator is applied, which gives the probability of observing a state as the normalized weight of its value.  Here the weight of each state is the absolute value of the corresponding entry and the weight of $ v_f $ is the $ l_1 $-norm of $ v_f $, denoted $ |v_f| $. So, we observe the $ j $-th state with probability
\[
	\dfrac{| v_f[j] |}{ | v_f | }.
\]
Therefore, the input $  x $ is accepted by $M$ with the following probability
\[
	\dfrac{\sum\limits_{e_j \in E_a} \left| v_f[j] \right| }{ \left| v_f \right| }
	=
	\dfrac{\sum\limits_{e_j \in E_a} \left| v_f[j] \right| }{ \left| v_f[1] \right| + \cdots + \left| v_f[n] \right| }.
\]

\subsection{Promise problems}

The disjoint languages $ X \subseteq \Sigma^* $ and $ Y \subseteq \Sigma^* $  are said to be separated by  $M$ exactly or zero-error if any $ x \in X $ is accepted by $M$ with probability 1 and any $y \in Y $ is accepted by $M$ with probability 0, or vice versa. If $ |X| = |Y| =1 $, then it is said that given two different words (or pair) are separated by $ M $ exactly. In case of one-sided bounded error, any $ x \in X $ is accepted with probability 1 and any $ y \in Y $ is accepted with probability at most $ p<1 $, or vice versa. If $ |X| = |Y| =1 $, then it is said that given two different words (or pair) are separated by $ M $ with one-sided bounded-error.

Nondeterministic QFA is a theoretical model and it is defined as a special acceptance mode of a QFA, also known as recognition with cutpoint 0 \cite{YS10A}. The disjoint languages $ X \subseteq \Sigma^* $ and $ Y \subseteq \Sigma^* $  are said to be separated by a nondeterministic MCQFA $M$ if any $ x \in X $ is accepted by $M$ with some nonzero probability and any $y \in Y $ is accepted by $M$ with probability 0, or vice versa. If $ |X| = |Y| =1 $, then it is said that given two different words are separated by nondeterministic $ M $.

\section{Separating pairs with 2 states}
\label{sec:pairs}

In this section, we present our results on separating pairs.

\subsection{MCQFAs with real amplitudes}
\label{sec:pair-MCQFA-real}

We start with a 2-state ($ \{q_1,q_2\} $) unary MCQFA defined on $ \mathbb{R}^2 $. Note that any possible quantum state of such automaton is a point on the unit circle, where $ \ket{q_1} $ is $(1,0) $ and $ \ket{q_2} $ is $(0,1)$ \cite{ShurY14A,ShurY15A}. For any given two integers $d \geq 0$ and $t>0$, $ R_{d,t} $ is such a MCQFA with the following the specifications, where $ R $ stands for rotation:
\begin{itemize}
	\item The initial state is $ \cos (\frac{d\pi}{2t}) \ket{q_1} - \sin(\frac{d\pi}{2t}) \ket{q_2}  $, the point on the unit circle obtained by making a clockwise rotation with angle $ \frac{d\pi}{2t} $ ($d$ times $  \frac{\pi}{2t} $) when starting at the point $ \ket{q_1} $.
	\item The single unitary operator is a counter-clockwise rotation with angle $ \frac{\pi}{2t} $.
	\item The single accepting state is $q_1$.
\end{itemize}
We represent the details of $ R_{d,t} $ in Figure \ref{fig:MCQFA}.

\begin{figure}[!ht]  
  \centerline {\includegraphics[width=0.5\textwidth]{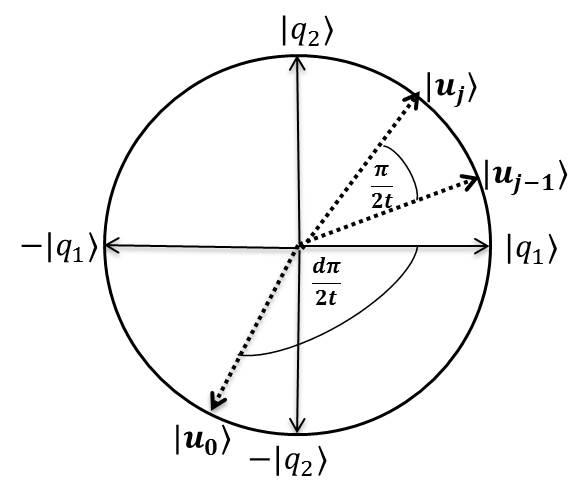}}
  \caption{The details of  $ R_{d,t} $}
\vspace*{-4mm}
\label{fig:MCQFA}
\end{figure}

\begin{theorem}
	Any given pair of unary words $ (a^d,a^{d+t}) $ ($ d \geq 0 $ and   $ t>0 $) can be exactly separated by MCQFA $ R_{d,t} $.
\end{theorem}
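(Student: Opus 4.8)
The plan is to track the quantum state as a rotating unit vector in $\mathbb{R}^2$ and show that the two words land exactly on the two orthogonal basis states. Writing $\theta = \frac{\pi}{2t}$, the single operator $U_a$ is the counter-clockwise rotation by $\theta$, so reading $k$ copies of $a$ applies $U_a^k$, which is the counter-clockwise rotation by $k\theta$. By definition the initial state $\ket{u_0} = \cos(\frac{d\pi}{2t})\ket{q_1} - \sin(\frac{d\pi}{2t})\ket{q_2}$ is the point reached from $\ket{q_1}$ by a clockwise rotation of $d\theta$.

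First I would handle $a^d$. After reading it the state is $U_a^d \ket{u_0}$, i.e. a counter-clockwise rotation by $d\theta$ composed with the clockwise rotation by $d\theta$ that produced $\ket{u_0}$ from $\ket{q_1}$. Since planar rotations compose additively and these two cancel, the final state equals $\ket{q_1}$ exactly. As $q_1$ is the single accepting state, $a^d$ is accepted with probability $|1|^2 = 1$. Next I would handle $a^{d+t}$: reading it applies the counter-clockwise rotation by $(d+t)\theta = \frac{d\pi}{2t} + \frac{\pi}{2}$ to $\ket{u_0}$, so relative to $\ket{q_1}$ the net rotation is counter-clockwise by exactly $\frac{\pi}{2}$, sending $\ket{q_1}$ to $\ket{q_2}$. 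Hence the amplitude on the accepting state $q_1$ is $0$ and $a^{d+t}$ is accepted with probability $0$. Together the two computations give exact separation.

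The computation is routine once the geometry is set up, so the only place to be careful is the sign and orientation bookkeeping: the initial clockwise rotation by $d\theta$ must precisely undo the $d$ counter-clockwise steps so that $a^d$ returns to $\ket{q_1}$, while the extra $t$ steps contribute exactly a quarter turn to reach the orthogonal state $\ket{q_2}$. Verifying the two angle identities $d\theta - d\theta = 0$ and $(d+t)\theta - d\theta = \frac{\pi}{2}$ is essentially the entire content of the argument, and I expect no genuine obstacle beyond confirming that the clockwise/counter-clockwise conventions are chosen consistently.
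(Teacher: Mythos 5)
Your proof is correct and follows the same approach as the paper's: the paper simply observes (deferring to its figure) that after $a^d$ the state is $\ket{q_1}$ and after $a^{d+t}$ it is $\ket{q_2}$, which is exactly the angle cancellation $d\theta - d\theta = 0$ and quarter-turn $(d+t)\theta - d\theta = \frac{\pi}{2}$ that you spell out explicitly. Your version just makes the rotation bookkeeping explicit where the paper relies on the picture.
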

\begin{proof}
	As can be easily verified on Figure \ref{fig:MCQFA}, after reading $ a^d $, the automaton is in $ \ket{q_1} $ and so it is accepted with probability 1, and, after reading $ a^{d+t} $, the automaton is in $ \ket{q_2} $ and so it is accepted with probability 0.
\qed\end{proof}

\begin{corollary}
	Any easy pair of words can be separated exactly by a 2-state MCQFA with real amplitudes.
\end{corollary}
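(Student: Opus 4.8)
The plan is to reduce a general easy pair to the unary case already handled by the theorem. Let $(x,y)$ be an easy pair. By definition there is a symbol $\sigma \in \Sigma$ with $|x|_\sigma \neq |y|_\sigma$; after possibly interchanging the roles of $x$ and $y$ (allowed, since exact separation permits either accepting $x$ and rejecting $y$ or the reverse), I may assume $|x|_\sigma < |y|_\sigma$. I then set $d = |x|_\sigma \geq 0$ and $t = |y|_\sigma - |x|_\sigma > 0$, matching the parameters of the automaton $R_{d,t}$.

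Next I would build a 2-state MCQFA on $\mathbb{R}^2$ that behaves exactly like $R_{d,t}$ but ``listens'' only to $\sigma$: keep the same initial state $\cos(\frac{d\pi}{2t})\ket{q_1} - \sin(\frac{d\pi}{2t})\ket{q_2}$ and the same accepting state $q_1$; let $U_\sigma$ be the counter-clockwise rotation by $\frac{\pi}{2t}$, and set $U_{\sigma'} = I$ for every other symbol $\sigma' \neq \sigma$. Since the identity operators leave the quantum state unchanged, reading any word $w$ with this automaton applies the rotation exactly $|w|_\sigma$ times, regardless of where the other symbols occur. Hence the final state after reading $w$ coincides with the final state of $R_{d,t}$ after reading $a^{|w|_\sigma}$.

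Applying this observation to $x$ and $y$ shows that reading $x$ is equivalent to reading $a^{d}$, and reading $y$ is equivalent to reading $a^{d+t}$, in $R_{d,t}$. The theorem then yields that $x$ is accepted with probability $1$ and $y$ with probability $0$, so the pair is separated exactly. I do not expect a genuine obstacle: the only point requiring a line of justification is that the positions of the non-$\sigma$ symbols are irrelevant, which is immediate because those symbols act as the identity and therefore do not disturb the rotation count on which the argument depends.
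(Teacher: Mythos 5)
Your proof is correct and matches the paper's intended argument: the corollary follows from the theorem on $R_{d,t}$ by applying the rotation only on the distinguishing symbol $\sigma$ and the identity on all other symbols, exactly the reduction the paper sketches in its discussion of easy and hard pairs. No issues.
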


There exist hard pairs of words that can be exactly separated by a 2-state MCQFA with real amplitudes, for instance, the pair $(ab,ba)$: Let $ \left( \frac{1}{\sqrt{2}} ~~ \frac{1}{\sqrt{2}} \right)^T $ be the initial state, and we apply $ U_a $ and $ U_b $ when reading symbols $a$ and $b$, respectively, where
\[
	U_a = \mymatrix{rrr}{\frac{1}{\sqrt{2}} &~~& \frac{1}{\sqrt{2}} \\ \frac{1}{\sqrt{2}} & ~&  -\frac{1}{\sqrt{2}}}
	\mbox{ and }
	U_b = \mymatrix{rrr}{1 & ~ & 0 \\ 0 & & -1}.
\]
Then, after reading the words $ ab $ and $ ba $, we obtain the following final states:
\[
	\ket{u_f^{ab}} = 
		\mymatrix{rrr}{\frac{1}{\sqrt{2}} &~~& \frac{1}{\sqrt{2}} \\ \frac{1}{\sqrt{2}} & ~&  -\frac{1}{\sqrt{2}}} 
		\mymatrix{rrr}{1 & ~ & 0 \\ 0 & & -1}
		\myvector{\frac{1}{\sqrt{2}} \\ \frac{1}{\sqrt{2}}}
		=
		\myvector{0 \\ 1}
\]
and
\[
	\ket{u_f^{ba}} = 
		\mymatrix{rrr}{1 & ~ & 0 \\ 0 & & -1}
		\mymatrix{rrr}{\frac{1}{\sqrt{2}} &~~& \frac{1}{\sqrt{2}} \\ \frac{1}{\sqrt{2}} & ~&  -\frac{1}{\sqrt{2}}} 		
		\myvector{\frac{1}{\sqrt{2}} \\ \frac{1}{\sqrt{2}}}
		=
		\myvector{1 \\ 0}.
\]
Therefore, the pair $(ab,ba)$ can be exactly separated by 2-state MCQFAs with real amplitudes.

However, such automata cannot distinguish all pairs of words, as exemplified by the following simple result.

\begin{theorem}
	No 2-state nondeterministic MCQFA with real entries can separate two words $x,y\in \{a^2, b^2\}^*$ provided that $|x|_a = |y|_a$ and $|x|_b = |y|_b$.
\end{theorem}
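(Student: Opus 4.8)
The plan is to exploit the special structure forced by restricting to words over the block alphabet $\{a^2, b^2\}$. A real $2$-state MCQFA has transition operators $U_a, U_b$ that are real orthogonal $2\times 2$ matrices, i.e.\ elements of $O(2)$, and every element of $O(2)$ is either a rotation (determinant $+1$) or a reflection (determinant $-1$). The key observation I would make first is that the square of \emph{any} element of $O(2)$ is a rotation: a rotation by $\theta$ squares to a rotation by $2\theta$, while every reflection is an involution and squares to the identity $R_0$. Hence $U_a^2$ and $U_b^2$ are both planar rotations, say $U_a^2 = R_{\phi_a}$ and $U_b^2 = R_{\phi_b}$.

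The next step is to use that rotations of the plane commute: $R_\phi R_\psi = R_{\phi+\psi} = R_\psi R_\phi$. Since any word $w \in \{a^2, b^2\}^*$ is processed by the automaton as a product of the operators $U_a^2$ and $U_b^2$ in the order the blocks appear, and these two operators commute, the total operator depends only on how many $a^2$ blocks and how many $b^2$ blocks occur in $w$, not on their order. Concretely, if $w$ contains $n_a$ blocks $a^2$ and $n_b$ blocks $b^2$, the operator applied is $R_{\phi_a}^{n_a} R_{\phi_b}^{n_b} = R_{n_a \phi_a + n_b \phi_b}$.

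Finally I would translate the hypotheses into these counts. The condition $|x|_a = |y|_a$ together with $x,y \in \{a^2,b^2\}^*$ gives that $x$ and $y$ contain the same number $n_a = |x|_a/2$ of $a^2$ blocks, and $|x|_b = |y|_b$ forces the same number $n_b$ of $b^2$ blocks. By the previous paragraph the operator realized on $x$ and the one realized on $y$ are the identical rotation $R_{n_a \phi_a + n_b \phi_b}$, so from the common initial state the automaton reaches the \emph{same} final state $\ket{u_f^x} = \ket{u_f^y}$ on both inputs. Consequently $x$ and $y$ are accepted with exactly the same probability, so one cannot be accepted with probability $0$ while the other is accepted with nonzero probability; no such automaton separates them even in the nondeterministic (cutpoint $0$) mode.

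I do not expect a genuine obstacle here: the whole argument rests on the two elementary facts that squares in $O(2)$ are rotations and that rotations commute. The only point demanding care is to stress that the restriction to the block alphabet $\{a^2,b^2\}$ is precisely what makes the operators appear in squared form; single symbols $U_a, U_b$ need be neither rotations nor mutually commuting, and indeed the earlier $(ab,ba)$ example exhibits a hard pair that \emph{can} be separated exactly because the blocks there are not squares.
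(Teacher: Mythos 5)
Your proof is correct and follows essentially the same route as the paper's: observe that $U_a^2$ and $U_b^2$ are planar rotations, hence commute, so the final state depends only on $|x|_a$ and $|x|_b$ and the two words yield identical final states. The only difference is that you spell out why squares of elements of $O(2)$ are rotations (rotations square to rotations, reflections square to the identity), a step the paper simply asserts.
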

\begin{proof}
Consider any such MCQFA, and let $U_a$ and $U_b$ be transition matrices corresponding to $a$ and $b$, respectively.
The operators $U_a^2$ and $U_b^2$ are rotations in $\mathbb{R}^2$, hence, they commute.
Thus,
\[
\ket {u^x_f} 
= U_b^{|x|_b} U_a^{|x|_a} \ket {u_0} 
= U_b^{|y|_b} U_a^{|y|_a} \ket {u_0} 
= \ket {u^y_f},
\]
and no final measurement can distinguish these two identical final states.
\qed
\end{proof}

\subsection{MCQFAs with complex amplitudes}
\label{sec:pair-MCQFA-complex}

In the previous section, we show that 2-state MCQFAs with real entries cannot separate all pairs of words. Here we conjecture that 2-state MCQFAs with complex entries actually can exactly separate any pair of words. In this section, we briefly sketch the relation of this conjecture to other known results.

\newcommand{\free}{\mathbf{F}}
The problem of separating pairs of words with MCQFAs is closely related to the problem of surjectivity of word maps in the special unitary group $SU(n)$.
Recall that $SU(n)$ is the group of unitary $n\times n$ matrices with determinant 1.

Let $\free_2$ be the free group on two letters $a$ and $b$.  
The group $\free_2$ consists of finite words over $\{a,b,a^{-1}, b^{-1}\}$ with the concatenation operation, modulo the relations $aa^{-1} = a^{-1}a = bb^{-1} = b^{-1}b = \epsilon$, where $\epsilon$ is the empty word.

Let $G$ be a group, usually a Lie group.
Any word $\omega \in \free_2$ defines in a natural way the corresponding word map $f_\omega\colon G\times G \to G$, in which a pair $(x, y)\in G^2$ is mapped to the product of the elements in G obtained from $\omega$ by replacing each $a$ by $x$ and each $b$ by $y$.  
In our case, we choose $G = SU(2)$, and we are interested in the image of the word map.
Recall that $SO(3) = SU(2)/\{\pm I\}$, so one might also consider the special orthogonal group in 3 dimensions instead of $SU(2)$.

\begin{fact}
\label{fact:1}
A pair of words $x,y\in \{a,b\}^*$ can be separated by a 2-state MCQFA if and only if the image of the word map corresponding to $\omega = xy^{-1} \in \free_2$ in $SU(2)$ contains a rotation by $\dfrac\pi2$, i.e., the unitary $\begin{pmatrix}i&0\\0&-i\end{pmatrix}$.
\end{fact}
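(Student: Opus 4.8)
The plan is to translate the automaton condition into a single algebraic condition on $SU(2)$ and then exploit conjugation-invariance of word-map images. First I would pin down what exact separation by a $2$-state MCQFA means. With a computational-basis measurement on two states, the only useful accepting set is a singleton, say $\{q_1\}$; then accepting $x$ with probability $1$ forces the reading operator $\hat U_x$ (the product of the $U_\sigma$'s along $x$) to send $\ket{u_0}$ to a phase multiple of $\ket{q_1}$, and rejecting $y$ forces $\hat U_y\ket{u_0}$ to be a phase multiple of $\ket{q_2}$. Eliminating $\ket{u_0}$, this is equivalent to the existence of transition unitaries for which the composite operator $W=\hat U_y\hat U_x^{-1}$ maps $\ket{q_1}$ to a phase multiple of $\ket{q_2}$, i.e.\ $W$ is \emph{anti-diagonal}. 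Since measurement probabilities are insensitive to global phases, I may rescale each $U_\sigma\in U(2)$ by a phase to land in $SU(2)$; then $W\in SU(2)$ as well.

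Next I would invoke three standard facts about $SU(2)$. For any word $\omega$, the image of $f_\omega$ is closed under conjugation, because $f_\omega(gXg^{-1},gYg^{-1})=g\,f_\omega(X,Y)\,g^{-1}$; hence the image is a union of conjugacy classes. In $SU(2)$ conjugacy classes are determined by the trace, and the traceless elements form a single class consisting of all ``rotations by $\pi$'', with eigenvalues $\pm i$. Both the target $\begin{pmatrix} i&0\\0&-i\end{pmatrix}$ and every anti-diagonal unitary of determinant $1$ lie in this traceless class. Consequently, ``there exist transition unitaries making $W$ anti-diagonal'' is equivalent to ``the relevant word map attains trace $0$'', which in turn is equivalent to ``its image contains $\begin{pmatrix} i&0\\0&-i\end{pmatrix}$.''

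The remaining work is bookkeeping to match the precise word $\omega=xy^{-1}$, and I expect this to be the only genuinely fiddly step. The operator $W$ comes from the word $xy^{-1}$ (or $yx^{-1}$ if one separates in the opposite direction) read in \emph{reverse} matrix order, whereas $f_\omega$ is defined with the forward product. I would dispose of all of these discrepancies at once using the trace viewpoint: reversing a word replaces the product by its transpose, inverting $\omega$ replaces each value by its inverse, and in $SU(2)$ both transposition and inversion preserve traces and send each element into its own conjugacy class (every $g\in SU(2)$ is conjugate to $g^{-1}$). Therefore $f_{xy^{-1}}$, $f_{yx^{-1}}$, and their reverse-ordered variants all have exactly the same image, so it is harmless to state the criterion for $\omega=xy^{-1}$.

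Finally I would assemble both directions. For the forward implication, a separating automaton yields, after normalization, a traceless $W\in SU(2)$ in the image, hence the whole traceless class and in particular $\begin{pmatrix} i&0\\0&-i\end{pmatrix}$ lie there. For the converse, assuming $\begin{pmatrix} i&0\\0&-i\end{pmatrix}$ is in the image, the trace-set equality produces $A,B\in SU(2)$ with $W$ traceless; conjugating $A$ and $B$ by a fixed $g\in SU(2)$ rotates $W$ into genuinely anti-diagonal form, after which setting $U_a=A,\ U_b=B$, $\ket{u_0}=\hat U_x^{-1}\ket{q_1}$, and accepting state $q_1$ gives an MCQFA that accepts $x$ with probability $1$ and rejects $y$. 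The conceptual crux, and the step I would emphasize, is the reduction of every convention choice to the single conjugation-invariant question ``can the word map hit trace $0$?''; once that is seen, the equivalence with containing $\begin{pmatrix} i&0\\0&-i\end{pmatrix}$ is immediate.
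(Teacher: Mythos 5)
Your proof is correct and arrives at the paper's criterion by a closely related but not identical route. Both arguments hinge on the conjugation-equivariance $f_\omega(gAg^{-1},gBg^{-1})=g\,f_\omega(A,B)\,g^{-1}$, so that the image of the word map is a union of conjugacy classes of $SU(2)$. The paper goes further: using continuity of $f_\omega$ together with compactness and connectedness of $SU(2)$, it shows the image is exactly $\{U : U \text{ has eigenvalues } e^{\pm i\theta},\ 0\le\theta\le\alpha(\omega)\}$, and it tests separability through the inner product $\braket{u_f^x}{u_f^y}=\bra{u_0}f_\omega(U_a^\dagger,U_b^\dagger)\ket{u_0}$, the impossibility half resting on the observation that $\bra{u_0}U\ket{u_0}$ has positive real part whenever $\theta<\pi/2$. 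You dispense with the topological input entirely: you only need that the trace-zero elements of $SU(2)$ form a single conjugacy class, so the image contains an anti-diagonal matrix if and only if it contains $\mathrm{diag}(i,-i)$, and you test separability through anti-diagonality of $W=\hat U_y\hat U_x^{-1}$ rather than through the inner product. The price is the bookkeeping about order and orientation ($W$ is a value of a word obtained from $xy^{-1}$ by reversal, inversion and conjugation), which you discharge correctly via trace invariance; the paper sidesteps the same issue by evaluating $f_\omega$ at $(U_a^\dagger,U_b^\dagger)$. Your version is marginally more elementary and also makes explicit the normalization from $U(2)$ to $SU(2)$ and the passage from orthogonal final states to a computational-basis accepting set, both left implicit in the paper.
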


\begin{proof}
For all unitaries $U_a$ and $U_b$ and $V$, we have 
\[
f_\omega(V^\dagger U_aV, V^\dagger U_bV) = V^\dagger f_\omega(U_a, U_b)V.
\]
Next, $f_\omega$ is a continuous map, and $SU(2)$ is compact and connected.  Thus, the image $f_\omega$ is of the form
\[
\left\{   
U\in SU(2) \mid \text{$U$ has eigenvalues $e^{\pm i\theta}$ with $0\le \theta\le \alpha$}
\right\}
\]
for some real $\alpha = \alpha(\omega)$ dependent on $\omega$.
The matrix $\begin{pmatrix}i&~~~0\\0&-i\end{pmatrix}$ is in the image of $f_\omega$ if and only if $\alpha(\omega) \ge \pi/2$.

Let $x = x_1\cdots x_n$ and $y = y_1\cdots y_m$.
For a MCQFA with the initial state $\ket{u_0}$ and transition matrices $U_a$ and $U_b$, we have
\[
\braket{u^x_f}{u^y_f}
 = \bra{u_0} U^\dagger_{x_1}U^\dagger_{x_2}\cdots U^\dagger_{x_n} U_{y_m}U_{y_{m-1}}\cdots U_{y_1} \ket{u_0} 
= \bra{u_0} f_\omega(U_a^\dagger, U_b^\dagger) \ket{u_0}. 
\]

If $\alpha(\omega)\ge\pi/2$, take $U_a$ and $U_b$ such that $f_\omega(U_a^\dagger, U_b^\dagger) = \begin{pmatrix}i&0\\0&-i\end{pmatrix}$ and $\ket{u_0} = \begin{pmatrix}1\\1\end{pmatrix}$.
In this case, $\braket{u^x_f}{u^y_f} = 0$, and these states can be separated exactly.

If $\alpha(\omega)<\pi/2$, then it is easy to see that $\bra{u_0} f_\omega(U_x^\dagger, U_y^\dagger) \ket{u_0} >0$ for all $U_x$, $U_y$ and $u_0$.  Hence, the states $u_f^x$ and $u_f^y$ cannot be separated exactly.
\qed
\end{proof}

Properties of word maps have been studied in various settings.
For instance, it is well known that there exist $U_a, U_b \in SO(3)$ such that the map $\omega \mapsto f_\omega(U_a,U_b)$ is an injective group homomorphism from $\free_2$ to $SO(3)$---the construction underlying the Banach-Tarski paradox.
For instance, one may take the matrices
\[
	U_a = \dfrac{1}{5} \left( \begin{array}{rrr}
		4 & 3 & 0 \\
		-3 & 4 & 0 \\
		0 & 0 & 5
\end{array}	 \right)
	\quad \mbox{ and }\quad
	U_b = \dfrac{1}{5} \left( \begin{array}{rrr}
		4 & 0 & 3 \\
		0 & 5 & 0 \\
		-3 & 0 & 4
\end{array}	 \right).
\]
In particular, for each $\omega\ne\epsilon$ the image of $f_\omega$ in $SU(2)$ is non-trivial.
Similar to Fact~\ref{fact:1}, we can get the following result.

\begin{theorem}
	\label{thm:nonMCQFA}
	Any pair $(x,y)$ can be separated by a 2-state nondeterministic MCQFA.
\end{theorem}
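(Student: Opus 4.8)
The plan is to mirror Fact~\ref{fact:1} but to cash in the slack that nondeterminism offers: rather than forcing the two final states to be \emph{orthogonal}, I only need to make them \emph{non-parallel}, and then align the rejecting line with $\ket{q_2}$ by a global change of basis. By the reduction recalled in the background every hard pair reduces to a binary one, and easy pairs are already separated exactly---hence nondeterministically---by the earlier results, so it suffices to treat a binary pair $x,y\in\{a,b\}^*$, for which $\omega = xy^{-1}$ is a genuine element of $\free_2$.

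First I would observe that, since $x\neq y$ and neither word contains inverse letters, $\omega\neq\epsilon$. The free-group embedding recalled just above guarantees that the image of $f_\omega$ in $SU(2)$ is non-trivial; by the explicit description of this image from the proof of Fact~\ref{fact:1} (a band of rotations with angle up to $\alpha(\omega)$), non-triviality is equivalent to $\alpha(\omega)>0$, and the band then also contains small-angle rotations. Hence I may pick $U_a,U_b$ for which $W:=f_\omega(U_a^\dagger,U_b^\dagger)$ is \emph{not} scalar, i.e.\ $W\neq\pm I$, so that $W$ has two distinct unimodular eigenvalues. This is the point where I replace Fact~\ref{fact:1}'s requirement $\alpha(\omega)\ge\pi/2$ by the far weaker $\alpha(\omega)>0$.

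Next I would reuse the identity $\braket{u^x_f}{u^y_f}=\bra{u_0}W\ket{u_0}$ established in Fact~\ref{fact:1}. Taking $\ket{u_0}$ with nonzero components along both eigenvectors of $W$ makes the right-hand side a non-trivial convex combination of the two distinct eigenvalues, hence of modulus strictly less than $1$; equivalently, $\ket{u^x_f}$ and $\ket{u^y_f}$ are linearly independent. Finally, since conjugating the whole automaton by a fixed unitary $V$ (replacing $U_a,U_b,\ket{u_0}$ by $VU_aV^\dagger,VU_bV^\dagger,V\ket{u_0}$) preserves all inner products, I am free to choose $V$ with $V\ket{u^y_f}=\ket{q_2}$. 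In the conjugated automaton with accepting set $\{q_1\}$, the word $y$ ends in $\ket{q_2}$ and is accepted with probability $0$, while $\ket{u^x_f}$ remains non-parallel to $\ket{q_2}$ and thus keeps a nonzero $\ket{q_1}$-component, so $x$ is accepted with positive probability---exactly nondeterministic separation.

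The conceptual crux, and the step most likely to hide a gap, is the passage from orthogonality to non-parallelism together with the observation that conjugation invariance lets me pin the rejecting state onto $\ket{q_2}$ only after the fact; this is what makes mere non-triviality of the image enough. One must also double-check the $SU(2)$-versus-$SO(3)$ subtlety, namely that ``non-trivial image'' really yields a non-central $W\neq\pm I$ rather than only $W\neq I$---which is exactly why I route the argument through $\alpha(\omega)>0$ and the small-angle rotations in the band. The remaining ingredients (that a non-central element of $SU(2)$ has two distinct eigenvalues, and that a unitary can carry any unit vector to $\ket{q_2}$) are routine.
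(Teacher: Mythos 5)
Your proposal is correct and follows essentially the same route the paper intends: it takes the Banach--Tarski free-group embedding to get a non-central $W=f_\omega(U_a^\dagger,U_b^\dagger)$ for $\omega=xy^{-1}\ne\epsilon$, and then relaxes the orthogonality requirement of Fact~\ref{fact:1} to mere non-parallelism of the final states, which is exactly the ``similar to Fact~\ref{fact:1}'' argument the paper leaves as a sketch. Your write-up supplies the details the paper omits (choosing $\ket{u_0}$ off the eigenvectors of $W$ and conjugating to align the rejecting state with a basis state), and it handles the $SU(2)$-versus-$SO(3)$ point correctly.
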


A famous result by Borel~\cite{borel:freeSubgroups} states that the image of $f_\omega$ is dense in the Zariski topology whenever $G$ is an algebraic connected semi-simple group and $\omega \ne \epsilon$.
However, for $G = SU(n)$, this does not imply that the image is dense in the ordinary topology.
Actually, it can be very far from that.
As shown by Thom~\cite{thom:convergentSeries}, for every $n\in\mathbb{N}$ and any neighbourhood $\cal V$ of $I\in SU(n)$ in the ordinary topology, there exists $\omega\in \free_2$ such that the image of $f_\omega$ is contained in $\cal V$.

However, the last result does not rule out the approach based on Fact~\ref{fact:1}, since the word $xy^{-1}$ has special structure.  In particular, as we show shortly, it is located shallow in the derived series of $\free_2$.

If $G$ is a group and $x,y\in G$, then $[x,y] = xyx^{-1}y^{-1}$ is called the commutator of $x$ and $y$.
The derived subgroup $G^{(1)}$ of $G$ is the subgroup $[G,G]$ generated by all the commutators of $G$.
By induction, one can define the $n$-th derived subgroup $G^{(n)} = [G^{(n-1)}, G^{(n-1)}]$.
This should not be confused with the lower central series, defined by $G_1 = G$ and $G_n = [G_{n-1}, G]$.

\begin{fact}
For any two different words $x,y\in\{a,b\}^*$, the element $xy^{-1}\in \free_2$ lies outside of the second derived group $\free_2^{(2)}$.
\end{fact}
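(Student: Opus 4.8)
The plan is to use the standard principle that if $\phi\colon\free_2\to G$ is a homomorphism onto a \emph{metabelian} group (one satisfying $G^{(2)}=1$), then $\free_2^{(2)}\subseteq\ker\phi$; hence it suffices to produce such a $\phi$ with $\phi(xy^{-1})\neq 1$. I would first dispose of the easy pairs: if $(x,y)$ is easy, then $|x|_\sigma\neq|y|_\sigma$ for some $\sigma$, so the image of $xy^{-1}$ under the abelianization $\free_2\to\mathbb{Z}^2$ is already nonzero, placing $xy^{-1}$ outside $\free_2^{(1)}\supseteq\free_2^{(2)}$. Thus all the content lies in the hard case, where $x$ and $y$ have equal $a$- and $b$-counts, and I must separate $xy^{-1}$ from the identity using a genuinely metabelian (not merely abelian) quotient.

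For that quotient I would use the Magnus embedding / Fox calculus. Let $A=\mathbb{Z}[s^{\pm1},t^{\pm1}]$ be the group ring of $\mathbb{Z}^2$, and define $\psi\colon\free_2\to M$ into the group of matrices $\begin{pmatrix} m & f\\ 0 & 1\end{pmatrix}$ with $m$ a monomial $s^it^j$ and $f\in A e_a\oplus A e_b$, by $\psi(a)=\begin{pmatrix} s & e_a\\ 0 & 1\end{pmatrix}$ and $\psi(b)=\begin{pmatrix} t & e_b\\ 0 & 1\end{pmatrix}$. A direct check shows every commutator in $M$ has upper-left entry $1$, and matrices with upper-left $1$ commute, so $M$ is metabelian and $\free_2^{(2)}\subseteq\ker\psi$. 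By induction the upper-left entry of $\psi(w)$ is the abelianized monomial $\bar w=s^{|w|_a}t^{|w|_b}$ (signed counts) and the upper-right entry is $f_w:=\overline{\partial_a w}\,e_a+\overline{\partial_b w}\,e_b$, where $\partial_a,\partial_b$ are the Fox derivatives and $\overline{\phantom{x}}$ denotes reduction of $\mathbb{Z}[\free_2]$ to $A$. In the hard case the upper-left of $\psi(xy^{-1})=\psi(x)\psi(y)^{-1}$ is $\bar x\,\bar y^{-1}=1$, and a one-line computation gives upper-right $f_x-f_y$; so the whole problem reduces to showing $f_x\neq f_y$.

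The crux is therefore to show that the pair of reduced Fox derivatives separates distinct positive words. For a positive word $x=x_1\cdots x_n$ one has $\overline{\partial_a x}=\sum_{i:x_i=a}s^{p_i}t^{q_i}$, where $(p_i,q_i)$ is the abelianization of the prefix $x_1\cdots x_{i-1}$; reading $x$ as a monotone lattice path from the origin (each $a$ a unit step right, each $b$ a unit step up), $(p_i,q_i)$ is exactly the point from which the $i$-th step is taken. Since the path is monotone, its points have strictly increasing level $p_i+q_i=i-1$ and are therefore \emph{distinct}, so $\overline{\partial_a x}$ and $\overline{\partial_b x}$ are $0/1$ polynomials recording precisely the points from which a horizontal, resp.\ vertical, step departs. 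To finish, let $i$ be the first index where $x$ and $y$ differ, say $x_i=a\neq b=y_i$; the common prefix abelianizes to the unique point $(p,q)$ at level $i-1$, forcing the coefficient of $s^pt^q$ in $\overline{\partial_a x}$ to be $1$ and in $\overline{\partial_a y}$ to be $0$. Hence $f_x\neq f_y$, so $\psi(xy^{-1})\neq I$ and $xy^{-1}\notin\free_2^{(2)}$.

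I expect the main obstacle to be exactly this last step: arguing that the reduced Fox derivatives (equivalently the upper-right entries of the Magnus matrices) genuinely distinguish the two words. The abelianization alone discards too much, so one must verify that descending from $\mathbb{Z}[\free_2]$ to the group ring $A$ of $\mathbb{Z}^2$ still retains enough information; the monotone-lattice-path picture, whose decisive feature is that such a path never revisits a point, is what makes this transparent and constitutes the heart of the argument.
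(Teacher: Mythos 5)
Your proof is correct, but it takes a genuinely different route from the paper's. The paper works inside the free group itself: it uses the classical fact that $\free_2^{(1)}$ is free on the commutators $[a^k,b^\ell]$, writes each positive word in a normal form $\prod_i [a^{k_i},b^{\ell_i}]^{\varepsilon_i}\cdot a^{|x|_a}b^{|x|_b}$ with $k_i+\ell_i$ strictly increasing, and then reads off membership in $\free_2^{(2)}$ from the total degrees of these generators; the decisive injectivity step (distinct hard-paired words yield distinct generator data) is left as ``not hard to see.'' You instead push everything into a concrete metabelian quotient via the Magnus embedding, reduce the statement to the injectivity of the pair of reduced Fox derivatives on positive words, and then make that injectivity completely explicit with the monotone-lattice-path observation that the prefix abelianizations $(p_i,q_i)$ lie on distinct levels $p_i+q_i=i-1$, so the first differing letter produces a surviving coefficient discrepancy in $\overline{\partial_a x}-\overline{\partial_a y}$. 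The two arguments are closely related under the hood --- the Magnus embedding computes exactly the $\mathbb{Z}[\mathbb{Z}^2]$-module $\free_2^{(1)}/\free_2^{(2)}$ that the paper's generator bookkeeping describes --- but yours trades the paper's reliance on the free generating set and normal form of $\free_2^{(1)}$ for a self-contained matrix computation, and it supplies in full the combinatorial step that the paper only asserts. One minor point worth stating explicitly: in the hard case $|x|=|y|$, so a first differing index $i\le|x|$ exists and neither word is a prefix of the other; with that noted, the argument is complete.
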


\begin{proof}
An element $\omega\in \free_2$ lies in the first derived group $\free_2^{(1)}$ iff the total degree of both $a$ and $b$ in $\omega$ is 0.
That is, $xy^{-1}\notin \free_2^{(1)}$ if and only if $(x,y)$ is an easy pair.

Now assume that $(x,y)$ is a hard pair.
It is well-known that $\free_2^{(1)}$ is a free group with the set of generators 
\[
T = \left\{
[a^k, b^\ell] \; \mid\; k,\ell\in\mathbb Z\setminus\{0\}
\right\}.
\]
Note that $[a^k, b^\ell]^{-1} = [b^\ell, a^k]$.
Again, $\omega \in \free_2^{(1)}$ lies in $\free_2^{(2)}$ iff the unique decomposition of $\omega$ into the elements of $T$ contains each $[a^k, b^\ell]$ with total degree 0.

Note that for $x\in \{a,b\}^*$, we have a decomposition of the form
\[
x = \prod_i [a^{k_i}, b^{\ell_i}]^{\varepsilon_i} \cdot a^{|x|_a} b^{|x|_b},
\]
where, for all $i$, we have $k_i + \ell_i > k_{i-1} + \ell_{i-1}$ and $\varepsilon_i =\pm 1$.
Now it is not hard to see that $xy^{-1} \in \free_2^{(2)}$ if and only if $x = y$.
\qed
\end{proof}

There are some results showing surjectivity of $f_\omega$ for shallow $\omega$ in these series.
Using ideas by Got\^o~\cite{goto:theorem}, Elkasapy and Thom~\cite{elkasapy:goto} showed that if $\omega\notin \free_2^{(2)}$, then the corresponding word map $f_\omega\colon SU(n)\times SU(n) \to SU(n)$ is surjective for infinitely many $n\in\mathbb N$.
Another result by Klyachko and Thom~\cite{klyachko:equationsOverGroups} implies that the map $f_\omega\colon SU(2)\times SU(2) \to SU(2)$ is surjective if $\omega\notin [\free_2, \free_2]^2[\free_2, [\free_2, \free_2]]$.
The latter result does not always apply in our case since, for example,
\[
abba(baab)^{-1} = [ab, [b, a]] \in [\free_2, [\free_2, \free_2]] \subseteq [\free_2, \free_2]^2[\free_2, [\free_2, \free_2]].
\]

However, as noted in~\cite{elkasapy:goto}, no word $\omega\notin \free_2^{(2)}$ and $n\in\mathbb{N}$ is known such that $f_\omega$ is not surjective in $SU(n)$.
If $f_\omega\colon SU(2)\times SU(2)\to SU(2)$ is surjective for every $\omega\notin \free_2^{(2)}$, then any two different words can be exactly separated by a 2-qubit MCQFA.
This provides additional motivation to study this type of word maps.

\subsection{AfAs}
\label{sec:pair-AfA}

Here we show that 2-state AfAs can separate any pair exactly. We start with a 2-state AfA for easy pairs since we find the method algorithmically interesting.

Let $ S_{d,t} $ be a 2-state unary AFA that does not use right end-marker (or the related operator is identity), where $ S $ stands for subtraction. The only accepting state is the first state. The initial affine state is 
\[
	v_0 = \myvector{ 1 + \dfrac{d}{t} \\ \\ -\dfrac{d}{t} }
\]
and the affine operator for symbol $ a $ is
\[
	A_a = \mymatrix{ccc}{ 1 - \dfrac{1}{d} & ~~ &  -\dfrac{1}{d} \\ \\ \dfrac{1}{d} & & 1 + \dfrac{1}{d}  }.
\]
The affect of $ A_a $ can be easily observed after applying to the initial state:
\[
	v_1 = A_a v_0 = \mymatrix{ccc}{ 1 - \dfrac{1}{d} & ~~ &  -\dfrac{1}{d} \\ \\ \dfrac{1}{d} & & 1 + \dfrac{1}{d}  } \myvector{ 1 + \dfrac{d}{t} \\ \\ -\dfrac{d}{t} } = \myvector{ 1 + \dfrac{d}{t} - \dfrac{1}{d} \\ \\ -\dfrac{d}{t} + \dfrac{1}{d} }.
\]
And, it can be iteratively shown that
\[
	v_j = A^j_a v_{j-1} = \myvector{ 1 + \dfrac{d}{t} - \dfrac{j}{t} \\ \\ -\dfrac{d}{t} + \dfrac{j}{t} },
\]
where $ j > 0 $. 

\begin{theorem}
	Any given two different unary words $ a^d $ and $ a^{d+t} $ ($ d \geq 0 $ and   $ t>0 $) can be separated by AfA $ S_{d,t} $ exactly.
\end{theorem}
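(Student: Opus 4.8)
The plan is to exploit the closed-form expression for the intermediate affine state that has already been derived above, namely
\[
	v_j = \myvector{ 1 + \dfrac{d}{t} - \dfrac{j}{t} \\ -\dfrac{d}{t} + \dfrac{j}{t} },
\]
and to evaluate it at the only two lengths that matter, $ j = d $ and $ j = d+t $. Since $ S_{d,t} $ reads no right end-marker (the end-marker operator is the identity), the final state after reading $ a^j $ is exactly $ v_j $, so the acceptance probability is the normalized $ l_1 $-weight of the first (accepting) coordinate, $ \frac{|v_j[1]|}{|v_j[1]| + |v_j[2]|} $.

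First I would substitute $ j = d $. The two occurrences of $ \frac{d}{t} $ cancel, leaving $ v_d = \myvector{1 \\ 0} $, so the accepting coordinate carries all of the $ l_1 $-weight and $ a^d $ is accepted with probability $ \frac{|1|}{|1|+|0|} = 1 $. Next I would substitute $ j = d+t $; now the first coordinate becomes $ 1 + \frac{d}{t} - \frac{d+t}{t} = 0 $ and the second becomes $ 1 $, so $ v_{d+t} = \myvector{0 \\ 1} $ and $ a^{d+t} $ is accepted with probability $ \frac{|0|}{|0|+|1|} = 0 $. As one word is accepted with certainty and the other rejected with certainty, the pair is separated exactly. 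Note that the signs of intermediate entries are irrelevant here, since the weights are only read off at these two endpoints, where both coordinates are nonnegative.

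The one point that needs care — and what I expect to be the only real wrinkle — is the corner case $ d = 0 $, where the matrix $ A_a $ as written contains the undefined entry $ \frac{1}{d} $. Here the first word is the empty word, and $ v_0 = \myvector{1 \\ 0} $ already sits on the accepting state, so $ \epsilon $ is accepted with probability $ 1 $ before any symbol is read; it then remains only to reject $ a^t $. For this I would supply a well-defined operator for the $ d=0 $ instance, concretely the matrix obtained from $ A_a $ by replacing $ \frac{1}{d} $ with $ \frac{1}{t} $, whose columns still sum to $ 1 $ and which one checks advances $ v_{j-1} $ to $ \myvector{1 - \frac{j}{t} \\ \frac{j}{t}} $, landing on $ \myvector{0 \\ 1} $ after exactly $ t $ steps. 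For every $ d \geq 1 $ the argument is the single substitution into the established formula, so apart from this small bookkeeping the proof is a routine verification.
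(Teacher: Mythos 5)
Your proof is correct and follows the same route as the paper's own: substitute $j=d$ and $j=d+t$ into the closed-form state $v_j$ and read off acceptance probabilities $1$ and $0$. Your worry about the $d=0$ corner case in fact points at a typo in the definition of $A_a$: the denominators there should be $t$ rather than $d$ (as written, $A_a$ subtracts $\frac{1}{d}$ per step and so does not produce the displayed formula for $v_j$ unless $d=t$, whereas the paper's own proof requires each symbol to subtract $\frac{1}{t}$), so your replacement of $\frac{1}{d}$ by $\frac{1}{t}$ is the right fix for all $d$, not just for $d=0$.
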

\begin{proof}
	The AfA $ S_{d,t} $ starts in $ v_0 $, and after reading $ d $ symbols, the value of the first state is subtracted by $ \frac{d}{t} $, and so, the affine state becomes
	\[
		v_d = \myvector{1 \\ 0}
	\]
	and the input is accepted exactly. If $ t $ more symbols are read, the value of first state is subtracted by $ \frac{t}{t} $, and so, the affine state becomes
	\[
		v_{d+t} = \myvector{0 \\ 1}
	\]
	and the input is accepted with zero probability.
\qed\end{proof}

\begin{corollary}
	Any easy pair can be separated by a 2-state AfAs exactly.
\end{corollary}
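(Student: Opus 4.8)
The plan is to reduce an arbitrary easy pair to the unary situation already settled by the AfA $S_{d,t}$ of the preceding theorem. Since $(x,y)$ is easy, there is a symbol $\sigma \in \Sigma$ with $|x|_\sigma \neq |y|_\sigma$; by possibly interchanging the roles of $x$ and $y$ (equivalently, by swapping the accepting and rejecting outcomes, which the definition of exact separation permits), I may assume $|x|_\sigma < |y|_\sigma$. I then set $d = |x|_\sigma \geq 0$ and $t = |y|_\sigma - |x|_\sigma > 0$, so that $x$ and $y$ contain exactly $d$ and $d+t$ copies of $\sigma$, respectively. The guiding idea is that a 2-state AfA needs only to \emph{count} the occurrences of $\sigma$ and ignore every other symbol, thereby mimicking $S_{d,t}$ on the unary pair $(a^d, a^{d+t})$.

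Concretely, I would build a 2-state AfA $M$ over $\Sigma$ that reuses the initial affine state $v_0$, the accepting state, and the operator $A_a$ of $S_{d,t}$, by declaring $A_\sigma = A_a$ and $A_\tau = I$ (the identity) for every $\tau \in \Sigma \setminus \{\sigma\}$, and using no end-marker operator (or identity). Because an identity transition leaves the affine state untouched, reading any word $w$ drives $M$ into precisely the affine state that $S_{d,t}$ reaches after reading $a^{|w|_\sigma}$, irrespective of the positions of the $\sigma$'s and of the intervening symbols. In particular, after reading $x$ the operator $A_a$ has been applied exactly $d$ times, so by the iterated computation established before the theorem $M$ is in $v_d = \myvector{1 \\ 0}$ and accepts with probability $1$; after reading $y$ it has been applied $d+t$ times, so $M$ is in $v_{d+t} = \myvector{0 \\ 1}$ and accepts with probability $0$. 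Hence $M$ separates $(x,y)$ exactly.

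The only point deserving explicit mention is the interleaving of symbols: I would remark that the identity operators commute with $A_a$ and do not change the affine state, so the order in which $\sigma$'s and non-$\sigma$'s occur is irrelevant and only their number matters---this is exactly what lets the single-symbol analysis of $S_{d,t}$ transfer verbatim. The boundary case $d = 0$ poses no difficulty, since then $v_0 = \myvector{1 \\ 0}$ already and the counting argument still applies as soon as $t > 0$. I therefore expect no genuine obstacle: the corollary follows immediately from the theorem once the reduction to counting a single symbol is in place.
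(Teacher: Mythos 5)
Your proposal is correct and matches the paper's intended argument: the corollary is meant to follow from the unary theorem exactly by applying $A_a$ on the distinguishing symbol $\sigma$ and the identity on all other symbols, as the paper indicates in its discussion of easy versus hard pairs. Your explicit remark that identity transitions make only the count of $\sigma$ matter is the whole content of the (omitted) proof, so there is nothing to add.
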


Let $ E_{x,y} $ be a 2-state AfA defined on $ \Sigma = \{0,1\} $ for the binary pair $ (x,y) $, where $ E $ stands for encoding. The only accepting state is the second one. We denote the value of binary number $ 1x $ as $ e(x) $. We set $ d = e(x)-e(y) $. Let $ z \in \{0,1\}^* $ be the given input. The aim of $ E_{x,y} $ is to set the value of the first state to $ e(z) $ after reading $ z $. The initial state is 
\[
	v_0 = \myvector{1 \\ 0},
\]
where the first digit of $ e(z) $ is already encoded. Then, we apply the following operators for each $ 0 $ and $ 1 $:
\[
	A_0  = \mymatrix{rr}{ 2 & 0 \\ -1 & 1 } \mbox{ and } A_1 = \mymatrix{rr}{ 3 & 1 \\ -2 & 0 }.
\]
Suppose that a prefix of $z$, say $ z' $, is encoded, after reading $z'$, and the affine state is as follows:
\[
	v_{|z'|} = \myvector{ e(z') \\ 1 - e(z') }.
\]
If the next symbol is $ 0 $, then the new affine state will be
\[
	v_{|z'|+1} = \mymatrix{rr}{ 2 & 0 \\ -1 & 1 }  \myvector{ e(z') \\ 1 - e(z') } = \myvector{ 2e(z') \\ 1 - 2e(z') } = \myvector{ e(z'0) \\ 1 - e(z'0) },
\]
and if the it is $ 1 $, then 
the new affine state will be
\[
	v_{|z'|+1} = \mymatrix{rr}{ 3 & 1 \\ -2 & 0 }  \myvector{ e(z') \\ 1 - e(z') } = \myvector{ 2e(z') +1 \\ - 2e(z') } = \myvector{ e(z'1) \\ 1 - e(z'1) }.
\]
So, the affine operators work as desired. Therefore, after reading $ z $, the affine state is
\[
	v_{|z|} = \myvector{ e(z) \\ 1 - e(z) }.
\]
On the right end-marker, we apply a composition of two affine operators:
\[
	A_\dollar = A''_\dollar A'_\dollar =  \mymatrix{ ccc }{ \dfrac{1}{d} & ~~ & 0 \\ \\ 1 - \dfrac{1}{d} & & 1 } \mymatrix{ ccc }{ e(x) -1 & ~~ & e(x) \\ 2-e(x) & & 1 - e(x) }.
\]
After applying $ A'_\dollar $ and $ A''_\dollar $, we obtain respectively
\begin{equation}
 	\label{eq:affine-dollar1}
 	\myvector{ e(x) - e(z) \\ 1 - e(x) + e(z) } = \mymatrix{ ccc }{ e(x) -1 & ~~ & e(x) \\ 2-e(x) & & 1 - e(x) } \myvector{ e(z) \\ 1 - e(z) }
\end{equation}
and
\[
	v_f = \mymatrix{ ccc }{ \dfrac{1}{d} & ~~ & 0 \\ \\ 1 - \frac{1}{d} & & 1 } \myvector{ e(x) - e(z) \\ 1 - e(x) + e(z) } = 
	\myvector{ \dfrac{e(x)-e(z)}{d} \\ \\ 1 - \dfrac{e(x)-e(z)}{d}  }.
\]

Now, we can analyse the behaviour of $ E_{x,y} $ on $ (x,y) $. If $ z = x $, then $ v_f = \myvector{ 0 \\ 1 } $ and so the input is accepted with probability 1. If $ z=y $, then $ v_f = \myvector{ \frac{d}{d} = 1 \\ 0 } $ and so the input is accepted with probability 0.

\begin{theorem}
	Any pair is separated by 2-state AfAs exactly.
\end{theorem}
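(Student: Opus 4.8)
The plan is to assemble the machinery of this subsection into a single case analysis that reduces every pair $(x,y)$ to one of two cases already handled: an easy pair, dealt with by $S_{d,t}$, or a binary pair, dealt with by $E_{x,y}$. First I would dispose of easy pairs immediately: by the preceding Corollary, every easy pair is separated exactly by a 2-state AfA. It therefore remains only to treat hard pairs, for which $|x|_\sigma = |y|_\sigma$ holds for every $\sigma$, so that $|x| = |y|$ while $x \neq y$.

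Next I would treat a hard pair over the binary alphabet $\{0,1\}$ by invoking the automaton $E_{x,y}$ built just above. The computation already displayed shows that the final affine state is $(0,1)^T$ on input $x$ and $(1,0)^T$ on input $y$, giving acceptance probabilities $1$ and $0$ respectively. The one point needed to make $E_{x,y}$ well-defined is that $d = e(x) - e(y) \neq 0$, so that $\frac{1}{d}$ legitimately occurs in $A''_\dollar$; this holds because the map $z \mapsto e(z)$ (the value of the binary numeral $1z$) is injective on $\{0,1\}^*$ and $x \neq y$. I would also record that $E_{x,y}$ is a genuine AfA, i.e.\ that $v_0$ and every column of $A_0$, $A_1$, $A'_\dollar$, and $A''_\dollar$ sum to $1$, which is a direct check. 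Finally, for a hard pair over an alphabet with at least three symbols, I would apply the reduction of the Background: pick a position $j$ with $x_j = \sigma_i \neq \sigma_{i'} = y_j$ and project both words onto $\{\sigma_i,\sigma_{i'}\}$ to get a binary hard pair $(x',y')$. Running $E_{x',y'}$ with $\sigma_i,\sigma_{i'}$ in the roles of $0,1$ and the identity operator assigned to every other symbol makes the automaton process $x$ exactly as $E_{x',y'}$ processes $x'$, and likewise for $y$; since $(x',y')$ is separated exactly, so is $(x,y)$.

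The main obstacle is not any single hard computation—the verification of $E_{x,y}$ is already carried out before the statement—but rather confirming that the three cases are genuinely exhaustive and that the reduction composes correctly with $E_{x',y'}$. In particular, I would need to argue that inserting identity operators for the discarded symbols leaves the encoded value $e(\cdot)$ of the retained subsequence untouched, so that on the end-marker the automaton still sees $e(x')$ and $e(y')$ and separates them as intended.
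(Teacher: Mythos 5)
Your proposal is correct and follows essentially the same route as the paper, whose proof of this theorem is implicit in the construction and verification of $E_{x,y}$ immediately preceding the statement, combined with the easy-pair corollary and the background reduction of hard pairs over larger alphabets to binary hard pairs. Your additional checks (that $d=e(x)-e(y)\neq 0$ by injectivity of $z\mapsto e(z)$, and that the columns of the transition matrices sum to $1$) are exactly the routine verifications the paper leaves to the reader.
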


In the remaining part, we present a one-sided bounded error algorithm for AfAs that can recognize a singleton language: $ L_x = \{ x \in \Sigma^* \} $ or separating $x$ from any other word. Remark that if the alphabet is unary ($ |\Sigma| = 1 $), then it is called counting problem \cite{KTSV99} and it can be solved by 2-state AfAs with one-sided bounded-error \cite{VilY16A}. Now, we present an algorithm for $ L_x $ based on the AfA $ E_{x,y} $ given in the previous section.

Let $ B_x $ defined on $\Sigma = \{0,1\} $ be a 2-state AfA, where $ B $ stands for bounded-error. The AfA $ B_x $ is identical to $ E_{x,y} $ except that the affine operator for the right end-marker is only $ A'_\dollar $. Therefore, if the input $ z $ is read, the final affine state (see Equation \ref{eq:affine-dollar1}) is
\begin{equation}
	\label{eq:affine-final-B}
	v^z_f = \myvector{ e(x) - e(z) \\ 1 - e(x) + e(z) }.
\end{equation}
Then, if $ z=x $, the input is accepted with probability 1. Otherwise, $ e(z) - e(x) = i $ is a non-zero integer and so the final state will be one of the followings:
\[
	v_f = \myvector{i \\ 1 - i} \in \left\{ \cdots, \myvector{ -2 \\ 3 }, \myvector{ -1 \\ 2 }, \myvector{ 1 \\  0 }, \myvector{ 2 \\ -1 }, \cdots \right\}
\]
and the accepting probability is at most $ \frac{2}{3} $.

\begin{theorem}
	For any given word $ x \in \Sigma^* $, 2-state AfAs can separate $x$ from any other word with one-sided bounded error.
\end{theorem}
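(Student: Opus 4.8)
The plan is to split by alphabet size and to reuse the encoding automaton $B_x$ as much as possible. If $\Sigma$ is unary, then $L_x$ is exactly the counting problem, which a 2-state AfA already solves with one-sided bounded error~\cite{VilY16A}, so nothing new is needed there. If $\Sigma$ is binary, then $B_x$ itself is the desired automaton: the computation preceding this statement shows that reading $x$ leaves the final state $\myvector{0\\1}$, accepted with probability $1$, whereas any other word yields a final state $\myvector{m\\1-m}$ with $m$ a nonzero integer. Thus the only work left is to lift the base-$2$ construction of $B_x$ to an arbitrary finite alphabet; note that the earlier reduction of hard pairs to binary does \emph{not} help here, since it is tailored to a single opposing word $y$, whereas now we must reject all words $z\ne x$ at once.

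For $|\Sigma|=k\ge 2$ I would identify the symbols with the digits $0,\dots,k-1$ and replace the binary encoding by the base-$k$ encoding: let $e(z)$ be the value of the string $1z$ read as a base-$k$ numeral, so that $e(\epsilon)=1$ and $e(z\sigma_j)=k\,e(z)+j$. The first coordinate of the affine state again stores $e(z)$ and the second stores $1-e(z)$, and the per-symbol operator for digit $j$ becomes
\[
	A_{\sigma_j}=\mymatrix{rr}{k+j & j \\ 1-k-j & 1-j},
\]
whose columns sum to $1$ and which sends $\myvector{e(z)\\1-e(z)}$ to $\myvector{e(z\sigma_j)\\1-e(z\sigma_j)}$; specialising to $k=2$ recovers precisely the operators $A_0$ and $A_1$ of $B_x$. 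Keeping the end-marker operator $A'_\dollar$ unchanged, the final affine state is once more $v^z_f=\myvector{e(x)-e(z)\\1-e(x)+e(z)}$, so the error analysis of $B_x$ carries over verbatim.

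The two places that genuinely need care are the following. First, one must check that $e$ is \emph{injective on all of} $\Sigma^*$, not merely on words of a fixed length; this holds because $1z$ is a base-$k$ numeral with leading digit $1$, so distinct words map to distinct positive integers. Second, and more important for the bounded-error claim, one must verify that the acceptance probability $\frac{|1-m|}{|m|+|1-m|}$ stays uniformly below $1$ as $z$ ranges over the infinitely many words $\ne x$, that is, as $m=e(x)-e(z)$ ranges over all nonzero integers. A short computation gives $\sup_{m\in\mathbb{Z}\setminus\{0\}}\frac{|1-m|}{|m|+|1-m|}=\frac23$, attained at $m=-1$: for $m\ge 2$ the value is $\frac{m-1}{2m-1}<\frac12$, and for $m\le-1$ it is $\frac{1-m}{1-2m}$, which is largest at $m=-1$ and decreases to $\frac12$ as $m\to-\infty$. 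Since $\frac23<1$, the automaton separates $x$ from every other word with one-sided bounded error. The main obstacle is exactly this uniformity: the rejecting bound must not deteriorate to $1$ no matter how close $e(z)$ comes to $e(x)$, and it is the fixed cutpoint $\frac23$ that secures it.
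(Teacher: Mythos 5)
Your proposal is correct and follows essentially the same route as the paper: unary case by citation to the counting-problem result, binary case via $B_x$, and general alphabets by replacing the binary encoding with the base-$k$ encoding using exactly the operator $A_\sigma=\mymatrix{rr}{k+\sigma & \sigma \\ 1-k-\sigma & 1-\sigma}$ that the paper gives. Your added checks (injectivity of $e$ on all of $\Sigma^*$ and the explicit supremum $\tfrac{2}{3}$ of the rejection-side acceptance probability over all nonzero integers $m$) are correct and merely make explicit what the paper asserts in the discussion preceding the theorem.
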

\begin{proof}
	The proof for unary case is given in \cite{VilY16A} and the binary case is given above. For larger alphabets, we use the above algorithm. If $ |\Sigma|>k $, $ e(z) $ denote the value of $k$-ary number $ 1z $ for the word $ z \in \Sigma^* $ and the AfA can encode $ 1z $ in the value of the first state. The affine transformation for symbol $ \sigma \in \{ 0,1,\ldots,k-1 \} $ is
	\[
		A_\sigma = \mymatrix{ccc}{k+\sigma & ~~ & \sigma \\ 1 - k -\sigma & & 1 - \sigma}.
	\]
	If we apply $ A_\sigma $ to $ \myvector{ e(z) \\ 1- e(z) } $, we obtain
	\[
		\myvector{ e(z\sigma) = ke(z) + \sigma \\ 1 - e(z\sigma) } = \mymatrix{ccc}{k+\sigma & ~~ & \sigma \\ 1 - k -\sigma & & 1 - \sigma} \myvector{ e(z) \\ 1- e(z) }.
	\]
\qed\end{proof}

One may ask how can we reduce the undesired accepting probability for the words other than $ x $. In Equation \ref{eq:affine-final-B}, we can easily modify $ A'_\dollar $ and  obtain the following state for some $k>1 $:
\[
	v_f^z = \myvector{ k(e(x)- e(z)) \\ 1 - k(e(x)-e(z)) }.
\]
The input $ x $ is still accepted with probability 1. The final state for any other input $ z $ ($ e(x) - e(z) = i  $) can be one of the followings
\[
	\myvector{ki \\ 1 - ki} \in \left\{ \cdots, \myvector{ -2k \\ 1+2k }, \myvector{ -k \\ 1+k }, \myvector{ 0 \\  1 },\myvector{ k \\  1-k }, \myvector{ 2k \\ 1-2k }, \cdots \right\}
\]
and so the undesired accepting probability is at most $ \frac{k+1}{2k+1} = \frac{1}{2} + \frac{1}{4k+2} $, which can be arbitrary close to $ \frac{1}{2} $ when $ k \rightarrow \infty $.

Now, we show that the undesired accepting probability can be decreased arbitrarily by 3-state AfAs. We call it \textit{3-state trick}. We modify $ B_x $ by using another state such that the final state will be 
\[
	v_f = v^z_f = \myvector{ -ki \\ (k+1)i \\ 1-i }.
\]
Then, if $ x=z $, $ i=0 $ and so the input is accepted with probability 1. If $ x \neq z $, $ i $ is a nonzero integer, and the final state will be 
\[
	v_f \in \left\{ \cdots, \myvector{ 2k \\ -2k-2 \\ 3  }, \myvector{ k \\ -k-1 \\ 2  },  \myvector{ -k \\ k+1 \\ 0  }, \myvector{ -2k \\ 2k+2 \\ -1  },\cdots  \right\}
\]
and the accepting probability is at most $ \frac{2}{2k+1} $, which can be arbitrary close to 0 when $ k \rightarrow \infty $.

\begin{theorem}
	For any given word $ x \in \Sigma^* $, 3-state AfAs can separate $x$ from any other word with arbitrary small one-sided bounded error.
\end{theorem}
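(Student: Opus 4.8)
The plan is to formalize the ``$3$-state trick'' sketched above, building on the encoding machinery of $B_x$ (and, for alphabets larger than binary, on the base-$|\Sigma|$ encoding of the previous theorem). The idea is to route the encoded quantity into a fresh third coordinate at the right end-marker, so that the accepting weight of any $z\neq x$ shrinks relative to the total $l_1$-norm as a tunable parameter $k$ grows.

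First I would set up the reading phase. I keep exactly the transition operators that encode $e(z)$ in the first state, padded to act trivially on a third coordinate; concretely each $A_\sigma$ becomes a $3\times 3$ affine operator with the original $2\times 2$ block in the top-left corner and a $1$ in the bottom-right entry, so that after reading $z$ the affine state is $\myvector{e(z)\\ 1-e(z)\\ 0}$. Nothing new needs checking in this phase, since the extended operators are affine and leave the third coordinate at $0$. The relevant quantity is $i := e(x)-e(z)$, which equals $0$ exactly when $z=x$ and is a nonzero integer otherwise.

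Second, and this is the only real content, I would exhibit the end-marker operator $A_\dollar$ as an affine transformation on $\mathbb{R}^3$ sending $\myvector{e(z)\\1-e(z)\\0}$ to the target final state $v_f=\myvector{-ki\\(k+1)i\\1-i}$. Since the target is linear in $e(z)$ and the third input coordinate is $0$, the first two columns of $A_\dollar$ are read off from its value at $e(z)=0$ and its slope in $e(z)$, while the third column may be taken to be any fixed affine state. The step that needs care is checking that every column of $A_\dollar$ sums to $1$, i.e.\ that $A_\dollar$ is genuinely affine and not merely linear; this rests on the coefficient identity $-k+(k+1)-1=0$, which forces the three target coordinates $-ki$, $(k+1)i$, $1-i$ to sum to $1$ for every value of $i$.

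Finally I would analyze acceptance, taking the third state as the unique accepting state. If $z=x$ then $i=0$ and $v_f=\myvector{0\\0\\1}$ is accepted with probability $1$. If $z\neq x$ then $i$ is a nonzero integer, the accepting weight is $|1-i|$ and the $l_1$-norm is $(2k+1)|i|+|1-i|$, so the accepting probability equals $\frac{|1-i|}{(2k+1)|i|+|1-i|}$; maximizing over all nonzero integers (the worst index being $i=-1$) bounds this by $\frac{2}{2k+1}$, which tends to $0$ as $k\to\infty$. I expect the main obstacle to be purely bookkeeping: guaranteeing that the end-marker operator is affine rather than linear, and confirming that a single index dominates the worst case so that the error bound holds uniformly over all $z\neq x$.
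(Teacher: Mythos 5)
Your proposal is correct and follows essentially the same route as the paper: the paper's ``3-state trick'' is exactly the final state $\myvector{-ki\\(k+1)i\\1-i}$ with the third state accepting, yielding the bound $\frac{2}{2k+1}$ as $k\to\infty$. The only difference is that you explicitly verify the end-marker operator is affine (columns summing to $1$ via $-k+(k+1)-1=0$) and identify $i=-1$ as the worst case, details the paper leaves implicit.
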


\section{Separating two finite sets}
\label{sec:finite-sets}

In this section, we focus on a more general problem: Separating two finite languages. Let $ X = \{ x_1,\ldots,x_m \} $ and $ Y = \{ y_1,\ldots,y_n \} $ be two disjoint set of binary words by assuming that $ m \leq n $ (the sets are exchanged, otherwise).

\subsection{Exact AfAs}

If $ 1 = m < n $, then we can still obtain an exact algorithm. For this purpose, we use the AfA $ E_{x,y} $ given in Section \ref{sec:pair-AfA} after some certain modifications. Remember that the final state of $ E_{x,y} $ on the input $ z $ will be
\[
	v_f^z = \myvector{ e(x)-e(z) \\ 1  - e(x) + e(z)  }
\]
if $ A''_\dollar $ is not applied. Since we do not use the division factor (that uses $y$), we call this modified version $ E'_{x} $. 

We design a new AfA $ E_{x,Y} $ that executes $ E'_{y_1}, E'_{y_2},\ldots,E'_{y_n} $ in parallel: all initial states and affine operators for the same symbols are tensored in the same order. Then new final state on the input  $ z $ will be
\[
	\myvector{ e(y_1)-e(z) \\ 1 - e(y_1) + e(z)  } \otimes \myvector{ e(y_1)-e(z) \\ 1 - e(y_1) + e(z)  } \otimes \cdots \otimes \myvector{ e(y_n)-e(z) \\ 1  - e(y_n) + e(z)  } 
\]
which has a single vector representation having the first entry as 
\[
	\Pi^n_{j=1} (  e(y_j) - e(z) ) .
\]
So, if we sum up all entries except the first one to the second entry, the new final state will be 
\[
	\myvector{ \Pi_{j=1}^n (  e(y_j) - e(z) ) \\ 1- \Pi_{j=1}^n (  e(y_j) - e(z) ) \\ 0 \\ \vdots \\ 0 }
\]
Now, we define $ D $ as $ \Pi_{j=1}^n ( e(y_j) - e(x)) \neq 0 $. We can modify the final state once more as follows:
\[
	\myvector{  \dfrac{ \Pi_{j=1}^n (  e(y_j) - e(z) ) } {D}  \\ \\ 1- \dfrac{ \Pi_{j=1}^n (  e(y_j) - e(z) ) } {D} \\ \\ 0 \\ \vdots \\ 0 }
\]
Lastly, we pick the first state as the only accepting state.

If $ z = x $, then the first entry of the final state will be 1 and all the other entries are zeros. So, the input $ x $ is accepted with probability 1. If $ z \in Y $, then the first entry will be zero and so any member of $ Y $ is accepted with probability $ 0 $.

\begin{theorem}
	A given word $ x $ is separated from any member of a finite language $ Y $ by a $ 2^{|Y|} $-state AfA exactly.
\end{theorem}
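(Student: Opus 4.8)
The plan is to amplify the single-pair encoding automaton $E'_x$ from Section~\ref{sec:pair-AfA} into a machine that simultaneously compares the input against every element of $Y$. Recall that $E'_{y_j}$, reading an input $z$, produces the final affine state $\myvector{e(y_j)-e(z) \\ 1-e(y_j)+e(z)}$, whose first coordinate vanishes exactly when $z = y_j$. First I would run the $n = |Y|$ copies $E'_{y_1},\ldots,E'_{y_n}$ in parallel by tensoring their initial states and, for each symbol, tensoring the corresponding affine operators in a fixed order; this yields a machine with $2^{|Y|}$ states.

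The key observation is that the tensor product of the $n$ two-dimensional final states is a single $2^{|Y|}$-dimensional vector whose first coordinate is the product $\prod_{j=1}^n \left( e(y_j)-e(z) \right)$. I would then apply two affine post-processing operators on the right end-marker, mirroring $A'_\dollar$ and $A''_\dollar$ from the pair construction. The first collapses all coordinates other than the first into the second coordinate, producing $\myvector{\prod_{j=1}^n(e(y_j)-e(z)) \\ 1-\prod_{j=1}^n(e(y_j)-e(z)) \\ 0 \\ \vdots \\ 0}$; the second divides the first coordinate by the nonzero constant $D = \prod_{j=1}^n(e(y_j)-e(x))$ and corrects the second coordinate so that the entries still sum to $1$. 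Declaring the first state accepting then completes the description.

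Correctness follows by evaluating the first coordinate. If $z = x$ it equals $D/D = 1$, so the remaining coordinates are $0$ and $x$ is accepted with probability $1$; if $z \in Y$, then some factor $e(y_j)-e(z)$ is zero, so the first coordinate is $0$ and $z$ is accepted with probability $0$. The main thing to verify --- and the only real obstacle --- is that every operator introduced is genuinely affine, i.e.\ that each of its columns sums to $1$. This is automatic for the tensor of affine operators, since the entry-sum of a tensor product is the product of the entry-sums and hence equals $1$; for the two post-processing matrices it is a direct column-by-column check identical to the one already carried out for $A'_\dollar$ and $A''_\dollar$; and the tensored initial state is a legitimate affine state for the same tensor-of-sums reason.
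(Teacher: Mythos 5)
Your proposal is correct and follows essentially the same route as the paper: tensor the $|Y|$ copies of the end-marker-free pair automaton so the first coordinate becomes $\prod_{j=1}^{n}(e(y_j)-e(z))$, collapse the remaining coordinates into the second, normalize by $D=\prod_{j=1}^{n}(e(y_j)-e(x))\neq 0$, and accept on the first state. Your explicit check that the tensored and post-processing operators remain affine is a small welcome addition the paper leaves implicit.
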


\subsection{Bounded-error AfAs}

In this section, we focus on the general case $ 2 \leq m \leq n $. We use the bounded error algorithm for singleton language given at the end of Section \ref{sec:pair-AfA}. Remember that $ B_x $ is the AfA that leaves the final state in
\[
	\myvector{ e(x) - e(z) \\ 1 - e(x) + e(z) \\  }
\]
after reading input $ z $ (Equation \ref{eq:affine-final-B}). Let $ B_X $ be the $ 2^m $-state AfA obtained by tensoring $ B_x $s for each $ x \in X $ and the first state be the only accepting state. Then, its final state for the input $z$ will be
\[
	v_f = \myvector{ \Pi_{j=1}^m (e(x_j)-e(z))  \\ \vdots \\ \vdots},
\]
It is clear that the value of the first state will be zero if $ z \in X $, and it will be nonzero, otherwise.

We can easily modify $ B_X $, say $ B'_X $, in order to obtain the following final state
\[
	v_f = \myvector{ 1 - \Pi_{j=1}^m (e(x_j)-e(z)) \\ \Pi_{l=1}^m (e(x_j)-e(z)) \\ 0 \\ \vdots \\ 0 \\ }.
\]
Thus, the value of the first state will be 1 if $ z \in X $, and it will be different than 1 if $ z \notin X $. Therefore, the inputs in $ X $ are accepted with probability 1 and all the other inputs are accepted with probability at most $ \frac{2}{3} $. By using 3 state tricks given at the end of Section \ref{sec:pair-AfA}, the undesired accepting probability can be arbitrary close to zero (without adding a new state).

\begin{theorem}
	The disjoint languages $ X $ and $ Y $ ($ 1 <|X| \leq |Y| $) can be separated by $ 2^{|X|} $-state AfAs with arbitrary small one-sided bounded-error.  
\end{theorem}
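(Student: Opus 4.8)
The plan is to promote the automaton $B'_X$ sketched just above to a bona fide AfA and then push its one-sided error to $0$ with the 3-state trick, exploiting that $2^{|X|}\ge 4$ states are already on hand. First I would recall the bounded-error singleton automaton $B_x$, whose final affine state on an input $z$ is $\myvector{e(x)-e(z) \\ 1-e(x)+e(z)}$, and form the $2^{m}$-state tensor product obtained by tensoring the initial states and, for each symbol, the corresponding operators, in one fixed order, where $m=|X|$. Two observations make this both legal and useful. A tensor product of affine states is again an affine state, because the coordinate sums multiply as $1\cdot 1\cdots 1=1$; and the coordinate of the tensored final state indexed by the ``all-first'' branch is $\prod_{j=1}^{m}\bigl(e(x_j)-e(z)\bigr)$. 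Since $e$ is injective on words, this product is $0$ exactly when $z\in X$ and is a \emph{nonzero integer} otherwise.

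Next I would read the right end-marker with a single affine transformation $A_\dollar$ that repackages the $2^{m}$-dimensional affine state $w$, where $w[1]=\prod_{j}(e(x_j)-e(z))=:i$, into the amplified form
\[
	v_f = \myvector{ -ki \\ (k+1)i \\ 1-i \\ 0 \\ \vdots \\ 0 },
\]
and declare the third state to be the sole accepting one. The one thing that must be verified is that $A_\dollar$ is a legitimate affine operator, i.e.\ that each of its columns sums to $1$; this is exactly where the affine structure earns its keep. On an affine state the constant $1$ is nothing but $\sum_{\ell}w[\ell]$, so $1-i=\sum_{\ell\ge 2}w[\ell]$ is an honest linear functional of the coordinates, and the map sending $w$ to the vector with $v_f[1]=-k\,w[1]$, $v_f[2]=(k+1)\,w[1]$, $v_f[3]=\sum_{\ell\ge 2}w[\ell]$, and $v_f[\ell]=0$ for $\ell\ge 4$ has first column summing to $-k+(k+1)=1$ and every remaining column summing to $1$ as well. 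As $m\ge 2$ forces $2^{m}\ge 4$, no extra state is needed.

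Finally I would read off the two acceptance probabilities. If $z\in X$ then $i=0$, so $v_f=\myvector{0 \\ 0 \\ 1 \\ 0 \\ \vdots \\ 0}$ and $z$ is accepted with probability $1$; in particular each $x\in X$ is accepted with probability $1$. If $z\notin X$ then $i$ is a nonzero integer and the acceptance probability equals
\[
	\dfrac{|1-i|}{(2k+1)\,|i|+|1-i|}\le\dfrac{2}{2k+1},
\]
which tends to $0$ as $k\to\infty$. Because the whole construction depends only on $X$, every word outside $X$ is rejected with high probability, and since $X\cap Y=\emptyset$ this covers all of $Y$; hence the $2^{|X|}$-state automaton separates $X$ from $Y$ with one-sided error that becomes arbitrarily small as $k$ grows. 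The main obstacle is the middle step: showing that the simultaneous collapse of the unused tensor branches and the amplification by $k$ are realizable by one affine transformation with unit column sums, valid uniformly over all inputs. Once one notices that $1-i$ is linear in the coordinates on affine states, this obstacle dissolves. For alphabets larger than binary the same argument applies verbatim after replacing $e$ by the larger-base encoding used in the singleton construction.
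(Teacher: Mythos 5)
Your proposal is correct and follows essentially the same route as the paper: tensor the singleton automata $B_{x_j}$ so that one coordinate of the final affine state carries $\prod_{j}\bigl(e(x_j)-e(z)\bigr)$, which vanishes exactly on $X$, and then apply the paper's 3-state trick inside the $2^{|X|}$-dimensional space to drive the one-sided error below $\frac{2}{2k+1}$. The only differences are cosmetic: you skip the paper's intermediate stage with error at most $\frac{2}{3}$ and go straight to the amplified form $(-ki,(k+1)i,1-i,0,\ldots,0)$, and you verify explicitly (via $1-i=\sum_{\ell\ge 2}w[\ell]$) that the collapsing end-marker map has unit column sums, a point the paper leaves implicit.
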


\subsection{Nondeterministic MCQFAs}

Let $ \Sigma = \{a,b\} $ and $ x \in \Sigma^* $ be a word. We define a 3-state $ (\{q_1,q_2,q_3\}) $ MCQFA with real amplitudes that uses the matrices given before Theorem \ref{thm:nonMCQFA} when reading an $a $ and $ b $:
\[
	U_a = \dfrac{1}{5} \left( \begin{array}{rrr}
		4 & 3 & 0 \\
		-3 & 4 & 0 \\
		0 & 0 & 5
\end{array}	 \right)
	\quad \mbox{ and }\quad
	U_b = \dfrac{1}{5} \left( \begin{array}{rrr}
		4 & 0 & 3 \\
		0 & 5 & 0 \\
		-3 & 0 & 4
\end{array}	 \right).
\]
The accepting states are $ \{q_2,q_3\} $. The initial state is
\[
	\ket{u_0} = U^{-1}_{x_{1}} U^{-1}_{x_{2}} \cdots U^{-1}_{x_{|x|}} \ket{q_1}.
\]
If $ M_x $ reads $x$, the final state will be
\[
	\ket{u^x_f} = U_{x_{1}} U_{x_{2}} \cdots U_{x_{|x|}} U^{-1}_{x_{1}} U^{-1}_{x_{2}} \cdots U^{-1}_{x_{|x|}} \ket{q_1} = \ket{q_1}.
\]
If $ M_x $ reads a different word, say $ y \in \Sigma^* $ ($ y \neq x $), the final state will be
\[
	\ket{u^x_f} = U_{y_{1}} U_{y_{2}} \cdots U_{y_{|y|}} U^{-1}_{x_{1}} U^{-1}_{x_{2}} \cdots U^{-1}_{x_{|x|}} \ket{q_1} =  \alpha_1 \ket{q_1} + \alpha_2 \ket{q_2} + \alpha_3 \ket{q_3},
\] 
where $ |\alpha_1|^2 < 1 $ and $ | \alpha_2 |^2 + | \alpha_3 |^2 > 0 $ \cite{AW02}. Thus, $ M_x $ accepts only $ x $ with probability 0 and it accepts any other word ($y \neq x$) with non-zero probability. As pointed in Section \ref{sec:pair-MCQFA-complex} (see also \cite{AW02}), $ M_x $ can be mapped to a 2-state $ ( \{p_1,p_2\} ) $ MCQFA with complex entries, say $ N_x $, such that (1) after $ N_x $ reads $ x $, the amplitude of $ p_2 $ in the final state is zero, and when it reads $ y \neq x $, the same amplitude is non-zero. 

\begin{corollary}
	The complement of any singleton language is recognized by 2-state nondeterministic MCQFAs.
\end{corollary}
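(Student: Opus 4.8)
The plan is to read the statement off the construction that immediately precedes it: the 2-state complex MCQFA $N_x$ built above already separates $x$ from every other word, so it only remains to check that, with $p_2$ as the single accepting state, $N_x$ accepts exactly the complement of $\{x\}$ in nondeterministic mode. Recall that nondeterministic recognition of a language $L$ means exhibiting a MCQFA that assigns nonzero accepting probability to every word in $L$ and zero accepting probability to every word outside $L$. Here $L = \Sigma^* \setminus \{x\}$, so I must show that $N_x$ accepts $x$ with probability $0$ and every $y \neq x$ with positive probability.

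First I would set up the 3-state real witness $M_x$ on $\mathbb{R}^3$ using the free-group generators $U_a, U_b \in SO(3)$ displayed above, whose images generate a free group of rank two (the matrices underlying the Banach--Tarski paradox). Taking the initial state $\ket{u_0} = U_{x_1}^{-1}\cdots U_{x_{|x|}}^{-1}\ket{q_1}$ forces the computation on input $x$ to unwind back to $\ket{q_1}$, giving zero amplitude on the accepting set $\{q_2,q_3\}$. On an input $y \neq x$ the overall operator is the image in $SO(3)$ of the word $yx^{-1} \in \free_2$, which is a nontrivial group element; the point is that the corresponding nontrivial rotation moves $\ket{q_1}$ off the $q_1$-axis, so some amplitude leaks onto $q_2$ or $q_3$, i.e. $|\alpha_1|^2 < 1$ and $|\alpha_2|^2 + |\alpha_3|^2 > 0$, exactly as recorded above via \cite{AW02}.

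Next I would pass from three real states to two complex states through the double cover $SU(2) \to SO(3) = SU(2)/\{\pm I\}$ mentioned in Section~\ref{sec:pair-MCQFA-complex}. Each $SO(3)$ rotation $U_\sigma$ lifts to an $SU(2)$ matrix acting on $\mathbb{C}^2 = \mathrm{span}\{\ket{p_1},\ket{p_2}\}$, and the (real) condition ``$\ket{q_1}$ is fixed'' translates into the (complex) condition ``the final state has vanishing $p_2$-amplitude.'' Choosing $p_2$ as the accepting state then transfers the dichotomy of the previous paragraph verbatim: $N_x$ reaches zero $p_2$-amplitude on $x$ and nonzero $p_2$-amplitude on every $y \neq x$. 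This is precisely nondeterministic recognition of $\overline{\{x\}}$, which proves the corollary.

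The step I expect to be the real obstacle is the claim that the nontrivial rotation $f_{yx^{-1}}(U_a,U_b)$ does not fix $\ket{q_1}$ --- a nontrivial rotation still fixes every vector on its axis, so one must know that $\ket{q_1}$ never lies on the axis of any rotation arising from a reduced word $yx^{-1} \neq \epsilon$. This is exactly the freeness and triviality-of-stabilizer input from the Banach--Tarski construction, and it is what forces the use of the specific generators above rather than arbitrary $SO(3)$ elements. The accompanying care in the $SU(2)$-to-$SO(3)$ lift, namely tracking how the fixed-vector condition becomes the zero-amplitude condition, is routine once this is in place.
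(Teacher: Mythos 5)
Your proposal is correct and follows essentially the same route as the paper: the corollary is read off the preceding construction of $M_x$ with the Banach--Tarski generators, the initial state $U_{x_{|x|}}^{-1}\cdots U_{x_1}^{-1}\ket{q_1}$ unwinding $x$, and the passage to a 2-state complex machine via the double cover $SU(2)\to SO(3)$. The one point you rightly flag as the real content --- that $\ket{q_1}$ is not fixed by the image of any nontrivial reduced word in these specific generators, not merely that the image is a nontrivial rotation --- is exactly what the paper delegates to the citation \cite{AW02}.
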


Let $ N(X) = \{N_{x_1}, N_{x_2}, \ldots, N_{x_m} \} $ be the set of 2-state MCQFAs for words in $ X $. We can obtain a MCQFA, say $ N_X  $, by tensoring all  MCQFAs in $ N(X) $,
\[
	N_X = N_{x_1} \otimes N_{x_2} \otimes \cdots \otimes N_{x_m},
\] 
i.e. executing all of them in parallel. The set of states of  $ N_X $ is $ \{p_1,p_2\}^m $. If $ \ket{u_{j,0}} $ is the initial state of $ N_{x_j} $ and $ U_{j,a} $ ($U_{j,b}$) is the unitary operator for symbol $ a $ ($b$), then the initial state of $ N_X $ is
\[
	\ket{u_{1,0}} \otimes \ket{u_{2,0}} \otimes \cdots \otimes \ket{u_{m,0}} 
\]
and the unitary operator for symbol $a$ ($b$) is
\[
	U_{1,a} \otimes U_{2,a} \otimes \cdots \otimes U_{m,a} ~~~(U_{1,b} \otimes U_{2,b} \otimes \cdots \otimes U_{m,b}),
\]
where $ 1 \leq j \leq m $. Similarly, if $ \ket{u_{j,f}^y} $ is the final state of $ N_{x_j} $ and $ \beta_j $ is the amplitude of the state $ \ket{p_2} $ after reading binary word $ y  $, then the final state of $ N_X $ on $ y $ will be
\[
	\ket{u_{1,f}^y} \otimes \ket{u_{2,f}^y} \otimes \cdots \otimes \ket{u_{m,f}^y}
\]
and so the amplitude of $ \ket{(p_2,p_2,\ldots,p_2)} $ will be
\[
	\beta=\beta_1 \beta_2 \cdots \beta_m.
\]

It is clear that, if $ x_j = y $, then $ \beta $ will be zero since $ \beta_j $ is zero. More generally, $ \beta =0 $ if and only if $ y \in X $. Thus, by picking $ (p_2,p_2,\ldots,p_2) $ as the only accepting state of $  M_X $, we can obtain the machine that separates any given word from a word in $ X $. Remark that the number of states of $ N_X $ is $ 2^m $.

\begin{theorem}
	The disjoint binary languages  $ X $ and $ Y $ ($ 1 \leq |X| \leq |Y| $) can be separated by nondeterministic MCQFAs with $ 2^{|X|} $ states.
\end{theorem}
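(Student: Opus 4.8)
The plan is to separate $X$ from its entire complement, which in particular separates it from $Y$; since $|X|\le|Y|$ this will use $2^{|X|}$ states. First I would invoke the single-word construction established just above: for each word $x_j\in X$ there is a $2$-state nondeterministic MCQFA $N_{x_j}$ over $\{p_1,p_2\}$ whose final amplitude $\beta_j$ of the state $\ket{p_2}$ on an input $y$ satisfies $\beta_j = 0$ if and only if $y=x_j$. This is exactly the content of the preceding corollary, and it is where all the real work lies---it rests on the injectivity of the Banach--Tarski embedding of $\free_2$ into $SO(3)$, transported to $SU(2)$, together with the $3$-state-to-$2$-state reduction sketched in Section~\ref{sec:pair-MCQFA-complex}.

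Next I would form the tensor product $N_X = N_{x_1}\otimes\cdots\otimes N_{x_m}$, running all $m$ copies in parallel on the shared input; its state set is $\{p_1,p_2\}^m$, of size $2^{m}=2^{|X|}$, with initial state and transition operators obtained by tensoring those of the factors in order. The single observation I need is that in a tensor product $\ket{u_1}\otimes\cdots\otimes\ket{u_m}$ the amplitude of the distinguished basis vector $\ket{(p_2,\ldots,p_2)}$ factorizes as the product $\beta=\beta_1\beta_2\cdots\beta_m$ of the individual $\ket{p_2}$-amplitudes. Hence $\beta=0$ precisely when some $\beta_j=0$, i.e.\ precisely when the input equals some $x_j$, i.e.\ precisely when the input lies in $X$.

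Finally I would declare $(p_2,\ldots,p_2)$ to be the unique accepting state, so that the acceptance probability of any input $y$ equals $|\beta|^2$. Then every $y\in X$ is accepted with probability $0$, while every $y\notin X$---in particular every $y\in Y$, by disjointness of $X$ and $Y$---is accepted with nonzero probability. This is exactly separation in nondeterministic mode (in the ``or vice versa'' direction, with $X$ rejected and $Y$ accepted), and the machine uses $2^{|X|}$ states.

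The only genuine obstacle is the per-word gadget $N_{x_j}$, which is already supplied; once it is in hand the remainder is the elementary factorization of the tensor-product amplitude together with the bookkeeping that a product of scalars vanishes iff one factor does. I would still want to verify that a tensor product of unitary operators is unitary (so that $N_X$ is a legitimate MCQFA) and that tensoring over the \emph{smaller} of the two sets is what yields the claimed $2^{\min(|X|,|Y|)}=2^{|X|}$ bound rather than $2^{|Y|}$.
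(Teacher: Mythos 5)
Your proposal is correct and follows essentially the same route as the paper: take the per-word 2-state nondeterministic MCQFA $N_{x_j}$ from the preceding corollary, tensor the $|X|$ machines together, and use the factorization of the $\ket{(p_2,\ldots,p_2)}$-amplitude as $\beta_1\cdots\beta_m$ to conclude that the acceptance probability vanishes exactly on $X$. The additional checks you flag (unitarity of tensor products, tensoring over the smaller set) are exactly the right bookkeeping and are implicit in the paper's argument.
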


\section*{Acknowledgement.} We are grateful to Andreas Thom for pointing us to his papers on the subject, and for answering some of our questions.

\bibliographystyle{plain}
\bibliography{tcs}

\end{document}